\newtheorem{theorem}{Theorem}
\newtheorem{lemma}[theorem]{Lemma}
\newtheorem{observation}[theorem]{Observation}
\def\@endtheorem{\endtrivlist}
\newcommand{\select}[3]{\mathrm{select}_{#1}(#2,#3)}
\newcommand{\depth}[1]{\mathrm{depth}(#1)}
\newcommand{\degree}[1]{\mathrm{degree}(#1)}
\newcommand{\parent}[1]{\mathrm{parent}(#1)}
\newcommand{\levelancestor}[2]{\mathrm{level\_ancestor}(#1,#2)}
\newcommand{\childrank}[1]{\mathrm{child\_rank}(#1)}
\newcommand{\childselect}[2]{\mathrm{child\_select}(#1,#2)}
\newcommand{\numdescendants}[1]{\mathrm{num\_descendants}(#1)}
\newcommand{\preorderrank}[1]{\mathrm{pre\_rank}(#1)}
\newcommand{\preorderselect}[1]{\mathrm{pre\_select}(#1)}
\newcommand{\height}[1]{\mathrm{height}(#1)}
\newcommand{\lca}[2]{\mathrm{lca}(#1,#2)}
\newcommand{\preorderselectb}[2]{\mathrm{pre\_select}_{#1}(#2)}
\newcommand{\substr}[3]{#1[#2..#3]}
\newcommand{\subtree}[2]{#1\langle#2\rangle}
\newcommand{\entropy}[1]{H^*(#1)}
\newcommand{\entropyr}[1]{E(#1)}
\newcommand{\er}[1]{e(#1)}
\newcommand{\D}[2]{\mathcal{D}_{#1,#2}} 
\newcommand{\TL}[2]{\mathcal{T}_{#1,#2}} 
\newcommand{\func}{f}
\newcommand{\funcp}{f'}
\newcommand{\gfunc}{g}
\newcommand{\A}{\mathcal{A}}
\newcommand{\F}{\mathcal{F}}
\newcommand{\Num}[2]{\mathcal{N}(#1,#2)}
\newcommand{\fsize}[1]{|#1|}
\newcommand{\fsizeright}[1]{s_{#1}}
\newcommand{\fsizeboundary}[1]{s'_{#1}}
\newcommand{\fsizemax}[1]{s''_{#1}}
\newcommand{\fdegree}[1]{d_{#1}}
\newcommand{\fdistance}[1]{l_{#1}}
\newcommand{\fpreorder}[1]{p_{#1}}
\newcommand{\partitionb}{\mathrm{Decompose}}
\newcommand{\joinb}{\mathrm{Join}}
\newcommand{\partition}[1]{\partitionb(#1)}
\newcommand{\join}[1]{\joinb(#1)}
\newcommand{\packb}{\mathrm{pack}}
\newcommand{\pack}[2]{\packb(#1,#2)}
\begin{document}

\title{Representation of ordered trees with a given degree distribution}
\author{Dekel Tsur%
\thanks{Department of Computer Science, Ben-Gurion University of the Negev.
Email: \texttt{dekelts@cs.bgu.ac.il}}}
\date{}
\maketitle

\begin{abstract}
The degree distribution of an ordered tree $T$ with $n$ nodes
is $\vec{n} = (n_0,\ldots,n_{n-1})$,
where $n_i$ is the number of nodes in $T$ with $i$ children.
Let $\mathcal{N}(\vec{n})$ be the number of trees with degree distribution
$\vec{n}$.
We give a data structure that stores an ordered tree $T$ with $n$ nodes
and degree distribution $\vec{n}$ using
$\log \mathcal{N}(\vec{n})+O(n/\log^t n)$ bits for every constant $t$.
The data structure answers tree queries in constant time.
This improves the current data structures with lowest space for ordered trees:
The structure of Jansson et al.\ [JCSS 2012] that uses
$\log\mathcal{N}(\vec{n})+O(n\log\log n/\log n)$ bits,
and the structure of Navarro and Sadakane [TALG 2014] that uses
$2n+O(n/\log^t n)$ bits for every constant $t$.

\end{abstract}

\section{Introduction}

A problem which was extensively studied in recent years is designing a
succinct data structure that stores a tree while supporting queries on the tree,
like finding the parent of a node, or computing the lowest common ancestor
of two
nodes~\cite{Jacobson89,MunroR01,BenoitDMRRR05,DelprattRR06,GearyRR06,GearyRRR06,GolynskiGGRR07,RamanRS07,HeMS12,JanssonSS12,MunroRRR12,FarzanM14,NavarroS14,
MunroRS01,RamanR03,FarzanM11,ArroyueloDS16,GuptaHSV07}.
The problem of storing a static ordinal tree was studied
in~\cite{Jacobson89,MunroR01,BenoitDMRRR05,DelprattRR06,GearyRR06,GearyRRR06,
GolynskiGGRR07,JanssonSS12,MunroRRR12,FarzanM14,NavarroS14}.
These paper show that an ordinal tree with $n$ nodes can be stored
using $2n+o(n)$ bits while answering queries in constant time.
The space of $2n+o(n)$ bits matches the lower bound of $2n-\Theta(\log n)$ bits
for this problem.
In most of these papers, the $o(n)$ term is $\Omega(n\log\log n/\log n)$.
The only exception is the data structure of
Navarro and Sadakane~\cite{NavarroS14} which uses $2n+O(n/\log^t n)$ bits for
every constant $t$.

Jansson et al.~\cite{JanssonSS12} studied the problem of storing a tree
with a given degree distribution.
The \emph{degree distribution} of an ordered tree $T$ with $n$ nodes
is $\vec{n} = (n_0,\ldots,n_{n-1})$,
where $n_i$ is the number of nodes in $T$ with $i$ children.
Let $\mathcal{N}(\vec{n})$ be the number of trees with degree distribution
$\vec{n}$.
Jansson et al.\ showed a data structure that stores a tree $T$ with degree
distribution $\vec{n}$
using $\log\mathcal{N}(\vec{n})+O(n\log\log n/\log n)$ bits,
and answers tree queries in constant time.
This data structure is based on Huffman code that stores the sequence of
node degrees (according to preorder).
A different data structure was given by Farzan and Munro~\cite{FarzanM14}.
The space complexity of this structure is
$\log\mathcal{N}(\vec{n})+O(n\log\log n/\sqrt{\log n})$ bits.
The data structure of Farzan and Munro is based on a tree decomposition
approach.

In this paper, we give a data structure that stores a tree $T$ using
$\log\mathcal{N}(\vec{n})+O(n/\log^t n)$ bits, for every constant $t$.
This results improve both the data structure of
Navarro and Sadakane~\cite{NavarroS14}
(since $\log \mathcal{N}(\vec{n})\leq 2n$)
and the data structure of Jansson et al.~\cite{JanssonSS12}.
Our data structure supports many tree queries which are answered
in constant time.
See Table~\ref{tab:queries} for some of the queries supported by our data
structure.

\begin{table}
\caption{Some of the tree queries supported by the our data structure.
\label{tab:queries}}
\medskip
\centering
\begin{tabular}{lp{10.5cm}}
\toprule
Query & Description \\
\midrule
$\depth{x}$ & The depth of $x$. \\
$\height{x}$ & The height of $x$. \\
$\numdescendants{x}$ & The number of descendants of $x$. \\
$\parent{x}$ & The parent of $x$. \\
$\lca{x}{y}$ & The lowest common ancestor of $x$ and $y$. \\
$\levelancestor{x}{i}$ & The ancestor $y$ of $x$ for which
						$\depth{y} = \depth{x}-i$. \\
$\degree{x}$ & The number of children of $x$. \\
$\childrank{x}$ & The rank of $x$ among its siblings. \\
$\childselect{x}{i}$ & The $i$-th child of $x$. \\
$\preorderrank{x}$ & The preorder rank of $x$. \\
$\preorderselect{i}$ & The $i$-th node in the preorder. \\
\bottomrule
\end{tabular}
\end{table}

Our data structure is based on two components. The first component is the
tree decomposition method of Farzan and Munro~\cite{FarzanM14}.
While Farzan and Munro used two levels of decomposition, we use an
arbitrarily large constant number of levels.
The second component is the aB-tree of
Patrascu~\cite{Patrascu08}, which is a structure for storing an array of
poly-logarithmic size with almost optimal space,
while supporting queries on the array in constant time.
This structure has been used for storing trees
in~\cite{NavarroS14,Tsur_labeled}.
However, in these papers the tree is converted to an array and tree queries are
handled using queries on the array.
In this paper we give a generalized aB-tree which can directly store an object
from some decomposable family of objects.
This generalization may be useful for the design of succinct data structures
for other problems.

The rest of this paper is organized as follows.
In Section~\ref{sec:tree-decomposition} we describe the tree decomposition
of Farzan and Munro~\cite{FarzanM14}.
In Section~\ref{sec:aB-tree} we generalize the aB-tree structure of
Patrascu~\cite{Patrascu08}.
Finally, in Section~\ref{sec:data-structure}, we describe our data structure
for ordinal trees.


\section{Tree decomposition}\label{sec:tree-decomposition}
One component of our data structure is the tree decomposition of
Farzan and Munro~\cite{FarzanM14}.
In this section we describe a slightly modified version of this decomposition.
\begin{lemma}\label{lem:tree-decomposition}
For a tree $T$ with $n$ nodes and an integer $L$, there is a collection
$\D{T}{L}$ of subtrees of $T$ with
the following properties.
\begin{enumerate}
\item\label{enu:decomposition-edge}
Every edge of $T$ appears in exactly one tree of $\D{T}{L}$.
\item\label{enu:decomposition-size}
The size of every tree in $\D{T}{L}$ is at most $2L+1$ and at least~$2$.
\item\label{enu:decomposition-num}
The number of trees in $\D{T}{L}$ is $O(n/L)$.
\item\label{enu:boundary}
For every $T'\in \D{T}{L}$, at most two nodes of $T'$ can appear
in other trees of $\D{T}{L}$.
These nodes are called the \emph{boundary nodes} of $T'$.
\item\label{enu:boundary2}
A boundary node of a tree $T'\in \D{T}{L}$ can be either a root of $T'$
or a leaf of $T'$.
In the latter case the node will be called the \emph{boundary leaf} of $T'$.
\item\label{enu:intervals}
For every $T'\in \D{T}{L}$, there are at most two maximal
intervals $I_1$ and $I_2$ such that a node $x\in T$ is
a non-root node of $T'$ if and only if the preorder rank of $x$ is in
$I_1 \cup I_2$.
\end{enumerate}
\end{lemma}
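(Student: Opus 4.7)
The plan is to carry out a slight adaptation of the Farzan--Munro tree decomposition. I would compute, for every node $v\in T$, the size $s(v)$ of its subtree, and call a child $c$ of $v$ \emph{heavy} if $s(c)\geq L$ and \emph{light} otherwise. Components are then built top-down by processing each node $v$ as follows. Scan the children $c_1,\ldots,c_d$ of $v$ from left to right, maintaining a running accumulator initialised to zero. For each light child $c_i$, add $s(c_i)$ to the accumulator and include the subtree of $c_i$ in the current component; whenever the accumulator reaches or exceeds $L$, close the component with $v$ as its root and restart the accumulator at zero. When a heavy child $c_i$ is encountered, close off the current component while adding $c_i$ itself (not its descendants) as a boundary leaf, restart the accumulator, and recursively decompose the subtree rooted at $c_i$. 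The remaining tail, if nonempty, is closed off as a final component rooted at $v$.

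Properties~\ref{enu:decomposition-edge}, \ref{enu:boundary}, and~\ref{enu:boundary2} are immediate from the construction: every edge is processed exactly once, every component has $v$ as its root (possibly shared with $v$'s parent component) plus at most one boundary leaf that is a leaf of the component. For Property~\ref{enu:decomposition-size}, closing off happens as soon as the accumulator first reaches $L$, so the accumulated portion has size at most $L$ plus the size of the last added subtree (of size at most $L$ in the light case or $1$ in the boundary-leaf case), plus the root, giving the bound $2L+1$; the lower bound of $2$ is trivial since each component contains at least one edge. Property~\ref{enu:decomposition-num} follows from a standard charging argument: every closed component of size $\geq L$ can be charged directly to its own nodes, while short closed components arise only when triggered by a heavy child or by the end of $v$'s child list; both kinds are bounded by $O(n/L)$, since heavy subtrees are disjoint and short tails can be amortised against neighbouring large components.

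The main obstacle is Property~\ref{enu:intervals}. The non-root nodes of a component consist of the subtrees of a contiguous range of children $c_i,\ldots,c_j$ of the root, possibly with the proper descendants of a boundary leaf $c_k$ (with $i\leq k\leq j$) removed. Without a boundary leaf these nodes form a single preorder interval. If the boundary leaf is $c_i$ or $c_j$, removing its descendants only shortens one end of this interval, again yielding a single interval. The only case producing two intervals is when the boundary leaf is strictly inside the range, where the excluded descendants of $c_k$ carve out a gap between the two portions; even then one obtains exactly two maximal intervals, as required. Arranging the scan to guarantee at most one such boundary leaf per component, and ensuring the gap caused by its omitted descendants is placed correctly, is the slight modification of the Farzan--Munro construction that yields the stated preorder interval bound.
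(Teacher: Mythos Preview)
Your counting argument for Property~\ref{enu:decomposition-num} has a real gap. The claim that ``heavy subtrees are disjoint'' is false: heaviness is upward closed (every ancestor of a heavy node is heavy), so the heavy nodes form a subtree rooted at the root of $T$, and a simple path on $n$ nodes has $n-L+1$ heavy nodes. In your scan, every heavy child triggers a component closure, and the number of such events equals the number of heavy non-root nodes. On a path $v_1,\ldots,v_n$ your procedure therefore outputs the two-node components $\{v_1,v_2\},\{v_2,v_3\},\ldots$ all the way down until the remaining tail has fewer than $L$ nodes, i.e.\ $\Theta(n)$ components rather than $O(n/L)$. The tail-amortisation claim fails for the same reason: you process every heavy node, so there can be $\Theta(n)$ tails with no large neighbour to charge them to.

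This is precisely why the paper does \emph{not} treat all heavy nodes uniformly. It separates type-2 heavy nodes (those with at least two heavy children), of which there are only $O(n/L)$ by the standard branching argument, from type-1 heavy nodes, which form vertical paths. Type-2 nodes are handled much as you describe. For a maximal type-1 path $x_1,\ldots,x_k$, however, the algorithm descends to the deepest $x_i$ whose ``light mass'' (the subtree of $x_i$ after removing the heavy continuation) is still at least $L$, packs \emph{all} children of $x_i$ there at once, emits a single two-node tree $\{x_{i-1},x_i\}$, and recurses on $x_1,\ldots,x_{i-1}$. This vertical aggregation is what makes the number of components along the path proportional to (light mass)$/L$ rather than to the path length; your per-node top-down scan has no analogous mechanism, and without it Property~\ref{enu:decomposition-num} does not hold.
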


We now describe an algorithm that generates the decomposition of
Lemma~\ref{lem:tree-decomposition}
(the algorithm is based on the algorithm of Farzan and Munro with minor changes).
The algorithm uses a procedure $\pack{x}{x_1,\ldots,x_k}$ that receives
a node $x$ and some children $x_1,\ldots,x_k$ of $x$, where each child $x_i$
has an associated subtree $S_i$ of $T$ that contains $x_i$ and some of its
descendants. Each tree $S_i$ has size at most $L-1$.
The procedure merges the trees $S_1,\ldots,S_k$ into larger trees as follows.
\begin{enumerate}
\item
For each $i$, add the node $x$ to $S_i$, and make $x_i$ the child of $x$.
\item
$i\gets 1$.
\item
Merge the tree $S_i$ with the trees $S_{i+1},S_{i+2},\ldots$
(by merging their roots)
and stop when the merged tree has at least $L$ nodes,
or when there are no more children of $x$.
\item
Let $S_j$ be the last tree merged with $S_i$. If $j<k$, set $i\gets j+1$
and go to step~3.
\end{enumerate}
We say that a node $x\in T$ is \emph{heavy} if $|\subtree{T}{x}| \geq L$,
where $\subtree{T}{x}$ is the subtree of $T$ that contains $x$ and all its
descendants.
A heavy node is \emph{type~2} if it has at least two heavy children,
and otherwise it is \emph{type~1}.

The decomposition algorithm has two phases.
In the first phase the algorithm processes the type~2 heavy nodes.
Let $x$ be a type~2 heavy node and let $x_1,\ldots,x_k$ be children of $x$.
Suppose that the heavy nodes among $x_1,\ldots,x_k$ are
$x_{h_1},\ldots,x_{h_{k'}}$, where $h_1 < \cdots < h_{k'}$.
The algorithm adds to $\D{T}{L}$ the following trees.
\begin{enumerate}
\item A subtree whose nodes are $x$ and the parent of $x$ 
(if $x$ is not the root of $T$).
\item For $i=1,\ldots,k'$, a subtree whose nodes are $x$ and $x_{h_i}$.
\item For $i=1,\ldots,k'+1$, the subtrees generated by
$\pack{x}{x_{h_{i-1}+1},\ldots,x_{h_i-1}}$, where the subtree associated
with each $x_j$ is $\subtree{T}{x_j}$.
We assume here that $h_0 = 1$ and $h_{k'+1} = k+1$.
\end{enumerate}

In the second phase, the algorithm processes maximal paths of type~1 heavy
nodes.
Let $x_1,\ldots,x_k$ be a maximal path of type~1 heavy nodes
($x_i$ is the child of $x_{i-1}$ for all $i$).
If $x_k$ has a heavy child, denote this child by $x'$.
Let $S$ be a subtree of $T$ containing $x_1$ and its descendants, except
$x'$ and its descendants if $x'$ exists.
Let $i$ be the maximal index such that $|\subtree{S}{x_i}| \geq L$.
If no such index exists, $i=1$.
Now, run $\pack{x_i}{y_1,\ldots,y_d}$, where $y_1,\ldots,y_d$ are the
children of $x_i$ in $S$. The subtree associated with each $y_j$ is
$\subtree{S}{y_j}$.
Each tree generated by procedure $\packb$ is added to $\D{T}{L}$.
If $i>1$, add to $\D{T}{L}$ the subtree whose nodes are $\{x_i,x_{i-1}\}$,
and continue recursively on the path $x_1,\ldots,x_{i-1}$.

For a tree $T$ and an integer $L$ we define a tree $\TL{T}{L}$ as follows.
The tree $\TL{T}{L}$ has a node $v_S$ for every tree $S \in \D{T}{L}$,
and a node $v_r$ which is the root of the tree.
For two trees $S_1,S_2\in\D{T}{L}$, $v_{S_1}$ is the parent of $v_{S_2}$
in $\TL{T}{L}$ if and only if the root of $S_2$ is the boundary leaf of
$S_1$.
The node $v_r$ is the parent of $v_S$ for every $S\in\D{T}{L}$ such that
the root of $S$ is the root of $T$.


\begin{observation}\label{obs:heavy}
For every tree $S\in \D{T}{L}$, if $v_S$ is a leaf of $\TL{T}{L}$,
the only node of $S$ that is a heavy node of $T$ is the root of $S$.
Otherwise, the set of nodes of $S$ that are heavy nodes of $T$
consists of all the nodes on the path
from the root of $S$ to the boundary leaf of $S$.
\end{observation}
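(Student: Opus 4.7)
My plan is to prove the observation by case analysis on how $S$ was added to $\D{T}{L}$ by the decomposition algorithm. The subtrees of $\D{T}{L}$ fall into five families: (i) two-node subtrees $\{\parent{x}, x\}$ added in Phase~1 for a type~2 heavy $x$; (ii) two-node subtrees $\{x, x_{h_j}\}$ connecting a type~2 heavy $x$ to one of its heavy children; (iii) the packed subtrees produced by the Phase~1 $\packb$ calls; (iv) the packed subtrees produced by a Phase~2 $\packb$ call at some $x_i$ on a maximal path $x_1,\ldots,x_k$ of type~1 heavy nodes; and (v) two-node subtrees $\{x_{i-1}, x_i\}$ added on such a path when $i>1$. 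I would establish the claim separately for each family.

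Families (i), (ii) and (v) are the easiest. In each case $S$ has exactly two nodes, both heavy (for (i), $\parent{x}$ contains the heavy node $x$ in its subtree; for (ii), both endpoints are heavy by construction; for (v), both lie on the type~1 heavy path), and the non-root child is the root of at least one other subtree of $\D{T}{L}$ -- for (i) and (ii) this is a Phase~1 or Phase~2 subtree produced at that child, and for (v) it is the Phase~2 packed subtree at $x_i$ produced just before the edge is added -- so the non-root is the boundary leaf and the claim is immediate. Family (iii) is also easy: the non-root nodes of $S$ lie in $\subtree{T}{x_j}$ for non-heavy children $x_j$ of the type~2 heavy root, so, since $|\subtree{T}{x_j}|<L$, none is heavy; moreover no node other than the root is shared with another subtree of the decomposition, so $v_S$ is a leaf of $\TL{T}{L}$ and the first clause of the observation applies.

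The main obstacle, and the part I expect to demand the most care, is family (iv). The key structural facts I would exploit are that, because each node on the maximal path has at most one heavy child, the children of $x_i$ in $S_{x_1}=\subtree{T}{x_1}\setminus\subtree{T}{x'}$ contain at most one heavy child of $T$ (namely $x_{i+1}$, only when $i<k$), and that $|\subtree{S_{x_1}}{x_{i+1}}|<L$ by the maximality of $i$, so the entire heavy chain $x_{i+1},\ldots,x_k$ is contained inside a single packed subtree. I would then split into two sub-cases. If the packed subtree $S$ does not contain $x_{i+1}$ (which also covers the case $i=k$), then only the root $x_i$ of $S$ is heavy and no node strictly below $x_i$ in $S$ appears in another subtree of the decomposition, so $v_S$ is a leaf of $\TL{T}{L}$, as required. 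If $S$ does contain $x_{i+1}$, then by maximality of the type~1 path the child $x'$ of $x_k$ must exist and be type~2 heavy (otherwise $x_1,\ldots,x_k,x'$ would extend the path and contradict its maximality), hence $\{x_k,x'\}$ belongs to $\D{T}{L}$ via family (i) and $x_k$ serves as the boundary leaf of $S$; the heavy nodes of $S$ are then precisely $x_i,x_{i+1},\ldots,x_k$, which is the path from the root of $S$ to the boundary leaf.
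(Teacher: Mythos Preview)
The paper leaves Observation~\ref{obs:heavy} unproved, so there is no argument in the paper to compare against. Your case analysis on the five ways a subtree $S$ can enter $\D{T}{L}$ is precisely the natural route, and each case is handled correctly: families (i), (ii), (v) are two-node paths of heavy nodes whose lower endpoint is the root of another decomposition subtree; family (iii) packs only light subtrees below a heavy root and shares no non-root node; and your split of family (iv) according to whether the packed piece contains $x_{i+1}$ is the right dichotomy, with the key inequality $|\subtree{S_{x_1}}{x_{i+1}}|<L$ ensuring the whole chain $x_{i+1},\ldots,x_k$ lands in a single packed piece.

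The one place that deserves an extra line is family~(iv) when $x_{i+1}\in S$: you assert that $x_k$ is the boundary \emph{leaf} of $S$, but the packed subtree also absorbs $\subtree{S_{x_1}}{x_{i+1}}$, which in principle includes the non-heavy children of $x_k$, so from the construction alone $x_k$ need not look like a leaf. Once you have established that $x_k$ is a non-root boundary node (via $\{x_k,x'\}\in\D{T}{L}$), you should explicitly invoke part~\ref{enu:boundary2} of Lemma~\ref{lem:tree-decomposition} to conclude it is a leaf, rather than leaving this implicit.
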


\section{Generalized aB-trees}\label{sec:aB-tree}

In this section we describe the aB-tree (augmented B-tree) structure of
Patrascu~\cite{Patrascu08}, and then generalize it.
An \emph{aB-tree} is a data structure that stores an array with elements from a
set $\Sigma$.
Let $\A$ be the set of all such arrays.
Let $B$ be an integer (not necessarily constant),
and let $\func\colon \A \to \Phi$ be a function that has the following property:
There is a function $\funcp\colon \mathbb{N}\times \Phi^B \to \Phi$ such that
for every array $A\in \A$ whose size is dividable by $B$,
$\func(A) = \funcp(|A|,\func(A_1),\ldots,\func(A_B))$, where
$A = A_1 \cdots A_B$ is a partition of $A$ into $B$ equal sized sub-arrays.

Let $A\in \A$ be an array of size $m=B^t$.
An \emph{aB-tree} of $A$ is a $B$-ary tree defined as follows.
The root $r$ of the tree stores $\func(A)$.
The array $A$ is partitioned into $B$ sub-arrays of size $m/B$,
and we recursively build aB-trees for these sub-arrays.
The $B$ roots of these trees are the children of $r$.
The recursion stops when the sub-array has size~1.

An aB-tree supports queries on $A$ using the following algorithm.
Performs a descent in the aB-tree starting at the root.
At each node $v$, the algorithm decides to which child of $v$ to go by examining
the $\func$ values stored at the children of $v$.
We assume that if these values are packed into one word, the decision is
performed in constant time.
When a leaf is reached, the algorithm returns the answer to the query.
Let $\Num{n}{\alpha}$ denote the number of arrays $A\in \A$ of size $n$
with $\func(A)=\alpha$.
The following theorem is the main result in~\cite{Patrascu08}.
\begin{theorem}\label{thm:aB-tree}
If $B=O(w/\log(|A|+|\Phi|))$ (where $w\geq \log n$ is the word size),
the aB-tree of an array $A$ can be stored using
at most $\log\Num{|A|}{\func(A)}+2$ bits.
The time for performing a query is $O(\log_B |A|)$ using pre-computed tables
of size $O(|\Sigma|+|\Phi|^{B+1}+B\cdot|\Phi|^B)$.
\end{theorem}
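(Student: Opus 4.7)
The plan is to prove Theorem~\ref{thm:aB-tree} by defining a recursive, arithmetic-coding-style encoding of the aB-tree, and then showing that descent queries can be simulated using a small number of precomputed tables.

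\textbf{Encoding.} For an array $A$ of size $n = B^k$ with $\func(A) = \alpha$, I would encode $A$ as an integer $c \in [0, \Num{n}{\alpha})$, stored in $\lceil \log \Num{n}{\alpha} \rceil$ bits. For $k=0$, enumerate $\{\sigma \in \Sigma : \func(\sigma) = \alpha\}$ in a canonical order and let $c$ be the rank of the single element of $A$. For $k \geq 1$, partition $A = A_1\cdots A_B$. Since $\func$ has the factoring property, the tuple $(\func(A_1), \ldots, \func(A_B))$ ranges over those $(\alpha_1, \ldots, \alpha_B)$ with $\funcp(n, \alpha_1, \ldots, \alpha_B) = \alpha$; list these tuples in lex order as $T_1, T_2, \ldots$ and let $T_j$ be the one realised by $A$. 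Recursively encode each $A_i$ as $c_i \in [0, \Num{n/B}{\func(A_i)})$ and set
\begin{align*}
c \;=\; \sum_{j' < j}\, \prod_{i=1}^{B} \Num{n/B}{\alpha_{j',i}}
\;+\; \sum_{i=1}^{B} c_i \prod_{i' < i} \Num{n/B}{\func(A_{i'})}.
\end{align*}
The identity $\Num{n}{\alpha} = \sum_{j} \prod_{i} \Num{n/B}{\alpha_{j,i}}$ (immediate from the factoring property) guarantees $c < \Num{n}{\alpha}$ and that $A\mapsto c$ is a bijection. Thus the total storage is $\lceil \log \Num{|A|}{\func(A)}\rceil \leq \log \Num{|A|}{\func(A)} + 1$ bits, with at most one further bit of implementation slack (e.g.\ to fix the storage length unambiguously), giving the $\log \Num{|A|}{\func(A)}+2$ bound.

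\textbf{Queries.} The algorithm descends the aB-tree. At a node with state $(n', \alpha, c)$ it must (i) recover the tuple $(\func(A_1), \ldots, \func(A_B))$, (ii) apply the query's child-selection rule (which, by hypothesis, inspects these values in $O(1)$ time once packed in one word), and (iii) compute the sub-encoding $c_{i^*}$ for the chosen child. I would precompute, for each level $n' = B, B^2, \ldots$, a table of size $O(|\Phi|^{B+1})$ indexed by $(\alpha, T_j)$ storing the offset $\sum_{j'<j}\prod_i \Num{n'/B}{\alpha_{j',i}}$, a table of size $O(B\cdot|\Phi|^B)$ indexed by $(T_j, i)$ storing the cumulative product $\prod_{i'<i}\Num{n'/B}{\alpha_{j,i'}}$, and a base-case table of size $O(|\Sigma|)$. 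Using word-parallel comparison over the packed offset list associated with $\alpha$, locating $j$ from $c$, reading off $T_j$, and extracting $c_{i^*}$ by a division and a subtraction all take $O(1)$ operations. The hypothesis $B = O(w/\log(|A| + |\Phi|))$ ensures that a tuple of $B$ elements of $\Phi$ and $B$ indices into $[|A|]$ fit in one machine word, so the packed comparison and the child-selection rule run in constant time. The descent has $\log_B |A|$ levels.

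\textbf{Main obstacle.} The delicate point is implementing the inversion at each level in $O(1)$ time while keeping the aggregated table size within the stated bound. Although $c$ at the root can be very long (up to $\Theta(n\log|\Sigma|)$ bits), each descent step only performs a division and subtraction on $c$ with a word-sized divisor taken from the tables, and the identification of $T_j$ is a packed-offset comparison that fits in one word by the hypothesis on $B$. Verifying that the tables can be shared or merged across levels so that the total size remains $O(|\Sigma|+|\Phi|^{B+1}+B\cdot|\Phi|^B)$, and that large-$c$ arithmetic genuinely reduces to $O(1)$ word operations per level, is the main bookkeeping hurdle in the proof.
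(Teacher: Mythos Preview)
The paper does not prove Theorem~\ref{thm:aB-tree}; it cites it as the main result of P\u{a}tra\c{s}cu~\cite{Patrascu08}. So there is no ``paper's own proof'' to compare against here. That said, your proposal does not reproduce P\u{a}tra\c{s}cu's argument and has a genuine gap.

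Your encoding is fine: the arithmetic-coding bijection you describe is exactly the counting identity underlying the space bound. The problem is the query algorithm. You store the entire array as a \emph{single} integer $c$ with up to $\Theta(n\log|\Sigma|)$ bits and then claim that each descent step ``only performs a division and subtraction on $c$ with a word-sized divisor''. Dividing a $\Theta(n)$-bit integer by a word-sized divisor is not an $O(1)$ operation in the word-RAM model; it costs $\Theta(n/w)$ time. So your decoding costs $\Theta(n/w)$ per level, not $O(1)$, and the total query time becomes $\Theta((n/w)\log_B n)$ rather than $O(\log_B n)$. You correctly flag this as the ``main bookkeeping hurdle'', but you do not resolve it, and it cannot be resolved within the monolithic-$c$ scheme.

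P\u{a}tra\c{s}cu's actual trick is a \emph{spill-over} representation: the encoding is not one long integer but is distributed over the nodes of the aB-tree. Each node $v$ owns a small memory block of $M_v$ bits plus receives a tiny ``spill'' value $s_v\in\{0,\ldots,K_v-1\}$ from its parent (with $K_v\le 2$, which is where the ``$+2$'' in the space bound comes from). The concatenation of the memory block and the spill decodes, via a word-sized table lookup, into the children's $\func$-tuple together with the children's spills. Because everything at a single node fits in $O(1)$ words under the hypothesis $B=O(w/\log(|A|+|\Phi|))$, one descent step is genuinely $O(1)$. Your proposal is missing this decomposition of $c$ into per-node pieces; without it, the claimed query time does not hold.
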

We note that the value $\func(A)$ is required in order to answer queries, 
and the space for storing this value is not included in the bound
$\log\Num{|A|}{\func(A)}+2$ of the theorem.

In the rest of this section we generalizes Theorem~\ref{thm:aB-tree}.
Let $\A$ be a set of objects (for example, $\A$ can be a set of
ordered trees).
As before, assume there is a function $\func\colon \A \to \Phi$.
We assume that $\func(A)$ encodes the size of $A$
(namely, $|A|$ can be computed from $f(A)$).
Suppose that there is a \emph{decomposition algorithm} that receives an object
$A\in \A$ and generates sub-objects $A_1,\ldots,A_B$ (some of these objects
can be of size 0)
and a value $\beta \in \Phi_2$ which contains the information necessary to
reconstruct $A$ from $A_1,\ldots,A_B$.
Formally, we denote by $\partition{A} = (\beta,A_1,\ldots,A_B)$ the output of
the decomposition algorithm.
We also define a function $\gfunc\colon \A \to \Phi_2$ by $\gfunc(A) = \beta$
and functions $\func_i\colon \A \to \Phi$ by $\func_i(A) = \func(A_i)$.
Let $\F = \{(g(A),f_1(A),\ldots,f_B(A)) \colon A\in \A\}$.
We assume that the decomposition algorithm has the following properties.
\begin{enumerate}
\renewcommand{\labelenumi}{(P\arabic{enumi})}
\item
\label{enu:first}
\label{enu:join-partition}
There is a function $\joinb\colon \Phi_2\times \A^B \to \A$
such that $\join{\partition{A}} = A$ for every $A\in \A$.
\item
\label{enu:partition-join}
$\partition{\join{\beta,A_1,\ldots,A_B}} = (\beta,A_1,\ldots,A_B)$
for every $A_1,\ldots,A_B \in \A$ and $\beta\in \Phi_2$ such that
$(\beta,\func(A_1),\ldots,\func(A_B))\in \F$.
\item
\label{enu:fprime}
There is a function $\funcp\colon \F\to \Phi$ such that
$\func(A) = \funcp(\gfunc(A),\func_1(A),\ldots,\func_B(A))$
for every $A\in \A$.
\item
\label{enu:size}
There is a constant $\delta\leq B/2$ such that
if $\partition{A} = (\beta,A_1,\ldots,A_k)$, then
$|A_i| \leq \delta |A|/B$ for all $i$.
\label{enu:last}
\end{enumerate}

Let $\Num{\alpha}{\beta}$ denotes the number of objects $A\in \A$ for which
$\func(A)=\alpha$ and $\gfunc(A)=\beta$.
Let
\[
\mathcal{X}_{\alpha,\beta} = \{(\vec{\alpha},\vec{\beta}) \colon
 \vec{\alpha}=(\alpha_1,\ldots,\alpha_B)\in\Phi^{B},
 \vec{\beta}\in\Phi_2^B,
 (\beta,\alpha_1,\ldots,\alpha_B)\in\mathcal{F},
 \funcp(\beta,\alpha_1,\ldots,\alpha_B)=\alpha
 \}.
\]
\begin{lemma}\label{lem:N-alpha-beta}
For every $\alpha\in\Phi$ and $\beta\in\Phi_2$,
$\sum_{((\alpha_1,\ldots,\alpha_B),(\beta_1,\ldots,\beta_B))\in
\mathcal{X}_{\alpha,\beta}} \prod_{i=1}^B \Num{\alpha_i}{\beta_i} 
= \Num{\alpha}{\beta}$.
\end{lemma}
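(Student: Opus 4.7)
The plan is to exhibit an explicit bijection between the two sets being counted. The right-hand side $\Num{\alpha}{\beta}$ counts objects $A\in\A$ with $\func(A)=\alpha$ and $\gfunc(A)=\beta$. The left-hand side, by expanding the product, counts tuples $(\vec{\alpha},\vec{\beta},A_1,\ldots,A_B)$ where $(\vec{\alpha},\vec{\beta})\in\mathcal{X}_{\alpha,\beta}$ and each $A_i\in\A$ satisfies $\func(A_i)=\alpha_i$ and $\gfunc(A_i)=\beta_i$. A bijection between these two sets will immediately give the identity.

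The forward map is $\partitionb$. Given $A$, compute $\partition{A}=(\beta',A_1,\ldots,A_B)$, and set $\alpha_i=\func(A_i)$, $\beta_i=\gfunc(A_i)$. Since $\gfunc$ is defined as the first coordinate of $\partitionb$, we have $\beta'=\gfunc(A)=\beta$. To check $(\vec{\alpha},\vec{\beta})\in\mathcal{X}_{\alpha,\beta}$, note that $(\beta,\alpha_1,\ldots,\alpha_B)=(\gfunc(A),\func_1(A),\ldots,\func_B(A))\in\F$ by the definition of $\F$, and $\funcp(\beta,\alpha_1,\ldots,\alpha_B)=\func(A)=\alpha$ by property (P\ref{enu:fprime}). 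The reverse map is $\joinb$: given $(\vec{\alpha},\vec{\beta},A_1,\ldots,A_B)$, set $A=\join{\beta,A_1,\ldots,A_B}$. Property (P\ref{enu:partition-join}) applies, because $(\beta,\func(A_1),\ldots,\func(A_B))=(\beta,\alpha_1,\ldots,\alpha_B)\in\F$ by membership in $\mathcal{X}_{\alpha,\beta}$; hence $\partition{A}=(\beta,A_1,\ldots,A_B)$, giving $\gfunc(A)=\beta$ and $\func(A)=\funcp(\beta,\alpha_1,\ldots,\alpha_B)=\alpha$.

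It remains to check mutual inversion. Starting with $A$, applying the forward map and then the reverse map yields $\join{\partition{A}}=A$ by property (P\ref{enu:join-partition}). Starting with a tuple, applying the reverse map and then the forward map returns the same tuple, since property (P\ref{enu:partition-join}) ensures $\partitionb$ recovers exactly $(\beta,A_1,\ldots,A_B)$, from which the associated $\alpha_i,\beta_i$ are determined as $\func(A_i),\gfunc(A_i)$, matching the originals.

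I do not expect a serious obstacle here; the proof is essentially a verification that the axioms (P\ref{enu:first})--(P\ref{enu:fprime}) were chosen precisely to make $\partitionb$ and $\joinb$ mutually inverse bijections that respect $\func$ and $\gfunc$ values. The one point that deserves care is confirming that $\vec{\beta}$ is not overconstrained in $\mathcal{X}_{\alpha,\beta}$: the set imposes no direct constraint on $\vec{\beta}$, and indeed on the right-hand side each $\beta_i=\gfunc(A_i)$ is whatever $A_i$ produces, so the counts match without any double counting. Property (P\ref{enu:size}) is not used in this identity; it will only be relevant later when bounding the depth of the resulting aB-tree.
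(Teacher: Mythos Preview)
Your proposal is correct and follows essentially the same approach as the paper: both establish a bijection between the set of objects $A$ with $\func(A)=\alpha$, $\gfunc(A)=\beta$ and the set of tuples $(A_1,\ldots,A_B)$ whose $\func$- and $\gfunc$-values lie in $\mathcal{X}_{\alpha,\beta}$, using $\partitionb$ and $\joinb$ as the mutually inverse maps and verifying well-definedness via (P\ref{enu:join-partition})--(P\ref{enu:fprime}). The paper merely phrases the bijection in the other direction (it defines $h=\joinb$ and checks injectivity and surjectivity separately), but the content is identical, and your remark that (P\ref{enu:size}) is unused here is also in line with the paper.
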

\begin{proof}
Let $\A_1$ be the set of all tuples $(A_1,\ldots,A_B) \in \A^B$ such that
\[
((\func(A_1),\ldots,\func(A_B)),(\gfunc(A_1),\ldots,\gfunc(A_B)))\in
\mathcal{X}_{\alpha,\beta}.
\]
Let $\A_2$ be the set of all $A\in \A$ such that $\func(A) = \alpha$ and
$\gfunc(A) = \beta$.
We need to show that $|\A_1| = |\A_2|$.
Define a mapping $h$ by
$h(A_1,\ldots,A_B) = \join{\beta,A_1,\ldots,A_B}$.
We will show that $h$ is a bijection from $\A_1$ to $\A_2$.

Fix $(A_1,\ldots,A_B) \in \A_1$ and denote $A=\join{\beta,A_1,\ldots,A_B}$.
By the definition of $\A_1$ and $\mathcal{X}_{\alpha,\beta}$,
$(\beta,\func(A_1),\ldots,\func(A_B))\in\F$,
and by Property~(P\ref{enu:partition-join},
$\partition{A}=(\beta,A_1,\ldots,A_B)$.
Hence, $\func_i(A)=\func(A_i)$ for all $i$ and $\gfunc(A) = \beta$.
We have
$\func(A)=\funcp(\gfunc(A),\func_1(A),\ldots,\func_B(A)) = 
\funcp(\beta,\func(A_1),\ldots,\func(A_B))=\alpha$,
where the first equality follows from Property~(P\ref{enu:fprime})
and the third equality follows from the definition of
$\mathcal{X}_{\alpha,\beta}$.
We also shown above that $\gfunc(A) = \beta$.
Therefore, $h(A_1,\ldots,A_B) \in \A_2$.

The mapping $h$ is injective due to Property~(P\ref{enu:partition-join}).
We next show that $h$ is surjective.
Fix $A\in \A_2$. By definition, $\func(A) = \alpha$ and $\gfunc(A) = \beta$.
Let $\partition{A} = (\beta,A_1,\ldots,A_B)$.
By Property~(P\ref{enu:join-partition}), $h(A_1,\ldots,A_B) = A$, so it remains
to show that $(A_1,\ldots,A_B) \in \A_1$.
By definition,
$(\beta,\func(A_1),\ldots,\func(A_B)) = (\beta,\func_1(A),\ldots,\func_B(A))
\in \F$.
By Property~(P\ref{enu:fprime}),
$\funcp(\beta,\func(A_1),\ldots,\func(A_B))=\func(A)=\alpha$.
Therefore, $(A_1,\ldots,A_B) \in \A_1$.
\end{proof}

We now define a generalization of an aB-tree.
A generalized aB-tree of an object $A\in \A$ is defined as follows.
The root $r$ of the tree stores $\func(A)$ and $\gfunc(A)$.
Suppose that $\partition{A} = (\beta,A_1,\ldots,A_B)$.
Recursively build aB-trees for $A_1,\ldots,A_B$,
and the roots of these trees are the children of $r$.
The recursion stops when the object has size~1 or~0.

The following theorem generalizes Theorem~\ref{thm:aB-tree}.
The proof of the theorem is very similar to the proof of
Theorem~\ref{thm:aB-tree} and uses Lemma~\ref{lem:N-alpha-beta} in order to
bound the space.
\begin{theorem}\label{thm:aB-tree-2}
If $B=O(w/\log(|\Phi|+|\Phi_2|))$,
the generalized aB-tree of an object $A\in \A$ can be stored using at most
$\log\Num{\func(A)}{\gfunc(A)}+2$ bits.
The time for performing a query is $O(\log_B |A|)$ using pre-computed tables
of size $O(a_1+|\Phi|^B\cdot|\Phi_2|^B\cdot(|\Phi|\cdot|\Phi_2|+B))$,
where $a_1$ is the number of objects in $\A$ of size $1$.
\end{theorem}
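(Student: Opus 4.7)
The plan is to adapt Patrascu's proof of Theorem~\ref{thm:aB-tree} to our generalized setting, with Lemma~\ref{lem:N-alpha-beta} replacing the counting identity for arrays. The whole generalized aB-tree will be represented by a single integer, and the tree structure will exist only conceptually, reconstructed on the fly during queries by re-applying the decomposition encoded in the precomputed tables.

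For each pair $(\alpha,\beta)\in\Phi\times\Phi_2$, I would fix a canonical ordering $T_1,T_2,\ldots,T_K$ of the tuples in $\mathcal{X}_{\alpha,\beta}$, writing $T_j=(\vec{\alpha}^{(j)},\vec{\beta}^{(j)})$ and setting $M_j=\prod_{i=1}^B \Num{\alpha^{(j)}_i}{\beta^{(j)}_i}$. For $|A|\geq 2$, let $\partition{A}=(\beta,A_1,\ldots,A_B)$ and let $j$ be the position of the tuple $((\func(A_1),\ldots,\func(A_B)),(\gfunc(A_1),\ldots,\gfunc(A_B)))$ in the ordering for $(\func(A),\beta)$. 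Define recursively
\[
\mathrm{enc}(A) \;=\; \sum_{j'<j} M_{j'} \;+\; \sum_{i=1}^B \mathrm{enc}(A_i)\cdot\prod_{i'<i}\Num{\func(A_{i'})}{\gfunc(A_{i'})},
\]
and for $|A|\leq 1$ use any fixed bijection between objects with $(\func,\gfunc)=(\alpha,\beta)$ and $\{0,\ldots,\Num{\alpha}{\beta}-1\}$. Induction on $|A|$, with Lemma~\ref{lem:N-alpha-beta} certifying that $\sum_j M_j=\Num{\alpha}{\beta}$, shows that $\mathrm{enc}$ is a bijection from $\{A\in\A:\func(A)=\alpha,\gfunc(A)=\beta\}$ onto $\{0,\ldots,\Num{\alpha}{\beta}-1\}$. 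Hence $\mathrm{enc}(A)$ stored in binary uses at most $\lceil\log\Num{\func(A)}{\gfunc(A)}\rceil\leq\log\Num{\func(A)}{\gfunc(A)}+2$ bits.

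For queries, I would descend in the conceptual aB-tree. Given $(\alpha,\beta)$ and $c=\mathrm{enc}(A)$ at the current level, locate the block index $j$ with $\sum_{j'<j}M_{j'}\leq c<\sum_{j'\leq j}M_{j'}$ (which reveals the children's $(\func,\gfunc)$ values), apply the standard query routing on these $2B$ values to choose a child $i$, and extract $\mathrm{enc}(A_i)$ by subtraction followed by division and modulo with the appropriate prefix products. The recursion depth is $O(\log_B|A|)$ by Property~(P\ref{enu:size}), since each step shrinks $|A|$ by a factor of at least $B/\delta\geq 2$.

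The precomputed tables I would build are: (i) a leaf table of $O(a_1)$ entries decoding size-$\leq 1$ objects; (ii) indexed by $(\alpha,\beta)\in\Phi\times\Phi_2$, the ordered list of tuples of $\mathcal{X}_{\alpha,\beta}$ together with cumulative block sums, of total size $O(|\Phi|^{B+1}\cdot|\Phi_2|^{B+1})$; and (iii) indexed by $(\vec\alpha,\vec\beta)\in\Phi^B\times\Phi_2^B$ and $i\in\{1,\ldots,B\}$, the prefix product $\prod_{i'<i}\Num{\alpha_{i'}}{\beta_{i'}}$ together with routing information, contributing $O(B\cdot|\Phi|^B\cdot|\Phi_2|^B)$; summing these matches the claimed bound. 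The main obstacle is verifying that, with $B=O(w/\log(|\Phi|+|\Phi_2|))$, the portion of $c$ consulted in a single descent step fits in $O(1)$ words so that block search, subtraction, and integer division each reduce to a single $O(1)$-time table lookup; this is exactly where the hypothesis on $B$ is used, mirroring the word-packing analysis in~\cite{Patrascu08}.
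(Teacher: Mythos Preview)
Your proposal is correct and matches the paper's own approach, which simply states that the proof follows Patrascu's argument for Theorem~\ref{thm:aB-tree} with Lemma~\ref{lem:N-alpha-beta} supplying the needed counting identity. Your remark that the efficient decoding step ``mirrors the word-packing analysis in~\cite{Patrascu08}'' is exactly the point---in practice this is Patrascu's spill-over representation rather than literal large-integer arithmetic on the full code $c$, but you have correctly identified both the bijection argument and where the hypothesis on $B$ enters.
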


\section{The data structure}\label{sec:data-structure}
For a tree $T$ with degree distribution $\vec{n} = (n_0,\ldots,n_{n-1})$ define
the tree degree entropy
$\entropy{T} = \frac{1}{n}\sum_{i\colon n_i>0} n_i \log \frac{n}{n_i}$.
Since $n\entropy{T} = \log \mathcal{N}(\vec{n}) + O(\log n)$,
it suffices to show a data structure for $T$ that uses
$n\entropy{T}+O(n/\log^t n)$ bits for any constant $t$.

Let $t$ be some constant.
Define $B=\log^{1/3} n$ and $L=\log^{t+2}n$.
As in~\cite{Patrascu08}, define $\er{i}$ to be the rounding up of
$\log \frac{n}{n_i}$ to a multiple of $1/L$.
If $n_i = 0$, $\er{i}$ is the rounding up of
$\log n$ to a multiple of $1/L$.
For a tree $S$ define
$\entropyr{S} = \sum_{i=1}^{|S|} \er{\degree{\preorderselectb{S}{i}}}$,
where $\preorderselectb{S}{i}$ is the $i$-th node of $S$ in preorder.
Let $\Sigma = \{i\leq n-1\colon n_i > 0 \}$.
We say that a tree $S$ is a \emph{$\Sigma$-tree} if for every node $x$ of $S$,
except perhaps the root, $\degree{x} \in \Sigma$.
\begin{lemma}\label{lem:entropy}
For every $m\leq n$ and $a\geq 0$,
the number of $\Sigma$-trees $S$ with $m$ nodes and $\entropyr{S} = a$
is at most $2^{a+1}$.
\end{lemma}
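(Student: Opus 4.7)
The plan is to bound the weighted sum $W^*_m := \sum_{S} 2^{-\entropyr{S}}$, taken over all $\Sigma$-trees $S$ with $m$ nodes, by $2$. Since every $S$ with $\entropyr{S} = a$ contributes exactly $2^{-a}$ to $W^*_m$, such a bound immediately yields at most $2 \cdot 2^a = 2^{a+1}$ trees with $\entropyr{S} = a$, which is the statement of the lemma.

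The first step is a Kraft-style estimate $\sum_{d=0}^{n-1} 2^{-\er{d}} \leq 2$. For $d \in \Sigma$ we have $\er{d} \geq \log(n/n_d)$, so $2^{-\er{d}} \leq n_d/n$, and summing yields $\sum_{d \in \Sigma} 2^{-\er{d}} \leq \sum_{d \in \Sigma} n_d / n = 1$. For $d \notin \Sigma$ we have $n_d = 0$, hence $\er{d} \geq \log n$, giving $2^{-\er{d}} \leq 1/n$; since at most $n$ such $d$ exist, their total contribution is also at most $1$.

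Next I bound the weight of the auxiliary family of trees in which the root is additionally forced to have degree in $\Sigma$. Let $W$ denote the total $\sum_{S} 2^{-\entropyr{S}}$ over all such trees of arbitrary size. Decomposing at the root gives the fixed-point equation $W = \phi(W)$ with $\phi(s) = \sum_{d\in\Sigma} 2^{-\er{d}} s^d$. Truncating to trees of depth at most $k$ and setting $W^{(k+1)} = \phi(W^{(k)})$ with $W^{(0)} = 0$, the monotonicity of $\phi$ together with $\phi(1) \leq 1$ gives $W^{(k)} \leq 1$ by induction on $k$, and hence $W = \lim_k W^{(k)} \leq 1$.

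Finally, a $\Sigma$-tree may have arbitrary root degree, so decomposing a general $\Sigma$-tree at its root and summing over all sizes gives total weight $\sum_{d=0}^{n-1} 2^{-\er{d}} W^d \leq \sum_{d=0}^{n-1} 2^{-\er{d}} \leq 2$. Since $W^*_m$ is a single term of this sum, $W^*_m \leq 2$, and the lemma follows. The main subtlety is that the unrestricted root degree forces us to use the full Kraft bound of $2$, while the interior nodes only consume the bound $W \leq 1$ from the subcritical branching; the extra factor of $2$ is precisely what produces the ``$+1$'' in $2^{a+1}$.
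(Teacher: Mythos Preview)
Your argument is correct and takes a genuinely different route from the paper's. The paper linearizes each $\Sigma$-tree into its preorder degree sequence and reduces the count to the number of strings $A$ over $\Sigma$ with a given value of $\entropyr{A}$; a one-line induction (essentially Kraft applied position by position) gives $\Num{m}{a}\le 2^a$, and the type~1/type~2 split on the root degree contributes the extra factor of~$2$. You instead work directly on the recursive tree structure: you bound the partition-function $W=\sum_S 2^{-\entropyr{S}}$ for trees with \emph{all} degrees in $\Sigma$ via the fixed-point iteration $W^{(k+1)}=\phi(W^{(k)})$ with $\phi(1)\le 1$, and then pay the extra factor of~$2$ when the unconstrained root degree is reinstated. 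Both proofs rest on the same Kraft inequality $\sum_{d\in\Sigma}2^{-\er{d}}\le 1$ and the same treatment of the root, but yours avoids the tree-to-string encoding at the cost of a slightly more analytic (branching-process) flavor. Two small points of phrasing: ``summing over all sizes'' should really be ``summing over sizes $m\le n$'' (so that the root degree stays in $\{0,\dots,n-1\}$ where $\er{d}$ is defined; the subtree weights can then be bounded above by the unrestricted $W$); and ``$W^*_m$ is a single term of this sum'' refers to the sum $\sum_{m}W^*_m$, which you have shown is \emph{bounded by} $\sum_{d=0}^{n-1}2^{-\er{d}}W^d$, not equal to it term-by-term. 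Neither affects the validity of the argument.
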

\begin{proof}
For a string $A$ over the alphabet $\Sigma$ define
$\entropyr{A} = \sum_{i=1}^{|A|} \er{A[i]}$.
Let $\Num{n}{a}$ be the number of strings over $\Sigma$ with length $m$ and
$\entropyr{A} = a$.
We first prove that $\Num{m}{a} \leq 2^a$ using induction on $m$
(we note that this inequality was stated in~\cite{Patrascu08} without a proof).
The base $m=0$ is trivial.
We now prove the induction step.
Let $A$ be a string of length $m$ with $\entropyr{A} = a$.
Clearly, $\er{A[1]} \leq a$ otherwise $\entropyr{A} > a$,
contradicting the assumption that $\entropyr{A} = a$.
If we remove $A[1]$ from $A$, we obtain a string $A'$ of length $m-1$ and
$\entropyr{A'} = \entropyr{A}-\er{\degree{A[1]}}\geq 0$.
Therefore,
$\Num{m}{a} = \sum_{i\in \Sigma\colon \er{i} \leq a} \Num{m-1}{a-\er{i}}$.
Using the induction hypothesis, we obtain that
\[\Num{m}{a} \leq \sum_{i\in\Sigma\colon \er{i} \leq a} 2^{a-\er{i}}
 \leq \sum_{i\in\Sigma} 2^{a-\log \frac{n}{n_i}}
 = 2^a  \sum_{i\in\Sigma} \frac{n_i}{n} = 2^a.
\]

We now bound the number of $\Sigma$-trees with $m$ nodes and $\entropyr{S} = a$.
We say that a $\Sigma$-tree is of \emph{type~1} if the degree of its root is in
$\Sigma$, and otherwise the tree is of \emph{type~2}.
For every $\Sigma$-tree $S$ we associate a string $A_S$ in which
$A_S[i] = \degree{\preorderselectb{S}{i}}$.
If $S$ is a type~1 $\Sigma$-tree then
$A_S$ is a string over the alphabet $\Sigma$ and
$\entropyr{A_S} = \entropyr{S}$.
Therefore, the number of type~1 $\Sigma$-trees $S$ with $m$ nodes and
$\entropyr{S} = a$ is at most $\Num{m}{a} \leq 2^a$.
If $S$ is a type~2 $\Sigma$-tree then
$\substr{A_S}{2}{m}$ is a string over the alphabet $\Sigma$ and
$\entropyr{\substr{A_S}{2}{m}} = \entropyr{S}-a'$, where
$a'$ is the rounding up of $\log n$ to a multiple of $1/L$.
Since there are at most $m$ ways to choose the degree of the root of $S$,
it follows that the number of type~2 $\Sigma$-trees $S$ with $m$ nodes and
$\entropyr{S} = a$ is at most $m\Num{m}{a-a'} \leq n 2^{a-a'} \leq 2^a$.
\end{proof}

To build our data structure on $T$,
we first partition $T$ into \emph{macro trees} using the decomposition
algorithm of Lemma~\ref{lem:tree-decomposition} with parameter $L$.
On each macro tree $S$ we build a generalized aB-tree as follows.


Let $\A$ be the set of all $\Sigma$-trees with at most $2L+1$ nodes,
and in which one of the leaves may be designated a boundary leaf.
We first describe procedure $\partitionb$.
For a tree $S\in \A$, $\partition{S}$ generates subtrees $S_1,\ldots,S_B$ 
of $S$ by applying the algorithm of Lemma~\ref{lem:tree-decomposition}
on $S$ with parameter $L(S)=\Theta(|S|/B)$, where the constant hidden in the
$\Theta$ notation is chosen such that the number of trees in the decomposition
is at most $B$
(such a constant exists to due to part~\ref{enu:decomposition-num} of
Lemma~\ref{lem:tree-decomposition}).
This algorithm generates subtrees $S_1,\ldots,S_k$ of $S$, with $k\leq B$.
The subtrees $S_1,\ldots,S_k$ are numbered according to the preorder ranks
of their roots, and two subtrees with a common root are numbered according to
the preorder rank of the first child of the root.
If $k<B$ we add empty subtrees $S_{k+1},\ldots,S_B$.

We next describe the mappings $\func\colon \A \to \Phi$ and
$\gfunc\colon \A \to \Phi_2$.
Recall that $\gfunc(S)$ is the information required to reconstruct $S$ from
$S_1,\ldots,S_B$. In our case, $\gfunc(S)$ is the balanced parenthesis
string of the tree $\TL{S}{L(S)}$.
The number of nodes in $\TL{S}{L(S)}$ is $k+1$.
Since $k\leq B$, $\gfunc(S)$ is a binary string of length at most $2B+2$.
Thus, $|\Phi_2|=O(2^{2B})$. 

We define $\func(S)$ to be a vector
$(\entropyr{S},\fsize{S},\fsizeright{S},\fsizeboundary{S},\fsizemax{S},
\fdegree{S},\fdistance{S},\fpreorder{S})$ whose components are
defined as follows.
\begin{itemize}
\item
$\fsizeright{S}=|\subtree{S}{x}|$, where $x$ is the rightmost child of
the root of $S$ (recall that $\subtree{S}{x}$ is the subtree of $S$ containing
$x$ and its descendants).
\item
$\fsizeboundary{S}=|\subtree{S}{x'}|$, where $x'$ is the child of the root of
$S$ which is on the path between the root of $S$ and the boundary leaf of $S$.
If $S$ does not have a boundary leaf, $\fsizeboundary{S}=0$.
\item
$\fsizemax{S}=\max_y |\subtree{S}{y}|$ where the maximum is taken over
every node $y$ of $S$ whose parent is on the path between the root of $S$
and the boundary leaf of $S$, and $y$ is not on this path.
If $S$ does not have a boundary leaf,
the maximum is taken over all children $y$ of the root of $S$.
\item
$\fdegree{S}$ is the number of children of the root of $S$.
\item
$\fdistance{S}$ is the distance between the root of $S$ and the boundary leaf
of $S$.
If $S$ does not have a boundary leaf, $\fdistance{S} = 0$.
\item
$\fpreorder{S}$ is the number of nodes in $S$ that appear before
the boundary leaf of $S$ in the preorder of $S$.
If $S$ does not have a boundary leaf, $\fpreorder{S} = 0$.
\end{itemize}
We note that the value $\entropyr{S}$ is required in order to bound the space
of the aB-trees.
The values $\fsize{S}$, $\fsizeright{S}$, $\fsizeboundary{S}$, $\fsizemax{S}$,
$\fdegree{S}$, and $\fdistance{S}$ are required in order to satisfy
Property~(P\ref{enu:partition-join}) of Section~\ref{sec:aB-tree}
(see the proof of Lemma~\ref{lem:properties-2} below).
These values are also used for answering queries.
Finally, the value $\fpreorder{S}$ is needed to answer queries.

The values $\fsize{S},\fsizeright{S},\fsizeboundary{S},\fsizemax{S},\fdegree{S},
\fdistance{S},\fpreorder{S}$ are integers bounded by $L$.
Moreover, $\entropyr{S}$ is a multiple of $1/L$ and
$\entropyr{S} \leq L(\log n+1/L)=L\log n+1$.
Therefore, $|\Phi| = O(L^7 \cdot L^2\log n) = O(L^9 \log n)$.
It follows that the condition $B=O(w/\log(|\Phi|+|\Phi_2|))$ of
Theorem~\ref{thm:aB-tree-2} is satisfied
(since $B=\log^{1/3} n$ and
$w/\log(|\Phi|+|\Phi_2|) = \Omega(w/B) = \Omega(\log^{2/3}n)$).
Moreover, the size of the lookup tables of Theorem~\ref{thm:aB-tree-2}
is $O(2^{2B(B+1)})=O(\sqrt{n})$.

The following lemmas shows that Properties~(P\ref{enu:first})--(P\ref{enu:last})
of Section~\ref{sec:aB-tree} are satisfied.
\begin{lemma}\label{lem:properties-1}
Property~(P\ref{enu:join-partition}) is satisfied.
\end{lemma}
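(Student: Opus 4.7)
The plan is to describe an explicit $\joinb$ that inverts $\partitionb$. Given $(\beta,S_1,\ldots,S_B)\in \Phi_2\times\A^B$, $\join{\beta,S_1,\ldots,S_B}$ first parses $\beta$ as a balanced parenthesis string to recover a tree $T'$ whose root $v_r$ has as its subtree the macro-tree, and whose non-root nodes are labeled $v_{S_1},\ldots,v_{S_k}$ in preorder (where $k$ is the number of nonempty input subtrees). It then builds the output $S$ by taking $S_1,\ldots,S_k$ and, for every edge of $T'$ in which $v_{S_i}$ is the parent of $v_{S_j}$, identifying the root of $S_j$ with the boundary leaf of $S_i$. If the boundary leaf of some $S_i$ is the root of several $S_{j_1},\ldots,S_{j_l}$ (listed in the $T'$-child order of $v_{S_i}$), the children of the merged node in $S$ are taken to be the concatenation, in order $j_1,\ldots,j_l$, of the children lists that the roots of $S_{j_1},\ldots,S_{j_l}$ carry inside their respective subtrees. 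On inputs not arising from $\partitionb$ we let $\joinb$ return an arbitrary element of $\A$.

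To verify $\join{\partition{S}}=S$ for every $S\in\A$, observe that when $\beta=\gfunc(S)$ the parsed tree $T'$ is by definition $\TL{S}{L(S)}$, so the parent--child relation on $T'$ is precisely the one saying ``the root of $S_j$ equals the boundary leaf of $S_i$''. Part~\ref{enu:decomposition-edge} of Lemma~\ref{lem:tree-decomposition} tells us that every edge of $S$ belongs to exactly one $S_i$, so the union of the $S_i$'s, after performing the identifications prescribed by $T'$, yields exactly the edge set of $S$. Part~\ref{enu:boundary2} ensures that each $S_i$ has at most one boundary leaf, so the location at which child subtrees are attached inside $S_i$ is unambiguous; together with the fact that a boundary leaf has no children inside its own $S_i$, this means all incoming child subtrees contribute their children as the complete children list of the merged node.

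The main obstacle is to verify that the left-to-right order of children produced by the gluing agrees with the original order in $S$, particularly at a node that is shared between several $S_{j}$'s. This is the point of the numbering rule in $\partitionb$, which orders the $S_i$'s by the preorder rank of their roots, with ties broken by the preorder rank of the first child. Under this rule, the order in which $v_{S_{j_1}},\ldots,v_{S_{j_l}}$ appear as children of $v_{S_i}$ in $\TL{S}{L(S)}$ is exactly the left-to-right order in which they contribute their children at the shared node in $S$, so the concatenation reconstructs the correct children sequence. Part~\ref{enu:intervals} of Lemma~\ref{lem:tree-decomposition} confirms that descendants of different $S_j$'s do not interleave in $S$'s preorder, so no further rearrangement is required, and $S$ is recovered exactly.
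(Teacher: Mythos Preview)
Your approach is essentially the same as the paper's: define $\joinb$ by parsing $\beta$ as the balanced-parenthesis encoding of $\TL{S}{L(S)}$, associate each non-root node of that tree with a subtree $S_i$ in preorder, and glue along boundary leaves. The paper's proof is much terser---after describing the gluing it simply asserts ``By definition, $\join{\partition{S}} = S$''---whereas you supply additional justification for the child ordering at merged nodes.

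One small omission in your construction: you only prescribe identifications along edges of $T'$ of the form $v_{S_i}\to v_{S_j}$, i.e.\ where both endpoints are non-root nodes. The case where the parent is the artificial root $v_r$ is not covered; there is no boundary leaf to attach to, yet the roots of all $S_j$ with $v_{S_j}$ a child of $v_r$ must still be merged into a single node (the root of $S$), or else your output is a forest rather than a tree. The paper handles this explicitly (``if $v$ is the root of $S_\beta$ just merge the roots of the trees associated with the children of $v$''). With this one-line fix your argument is complete and matches the paper's.
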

\begin{proof}
We define the function $\joinb$ as follows.
Given a balanced parenthesis string $\beta$ of a tree $S_\beta$
and trees $S_1,\ldots,S_B$, the tree $S = \join{\beta,S_1,\ldots,S_B}$
is constructed as follows.
For $i=1,\ldots,B$, associate the tree $S_i$ to the node
$\preorderselectb{S_\beta}{i+1}$.
For every internal node $v$ in $S_\beta$,
merge the boundary leaf of the tree $S_i$ associated with $v$,
and the roots of the trees associated with the children
of $v$
(if $v$ is the root of $S_\beta$ just merge the roots of the trees associated
with the children of $v$).
By definition, $\join{\partition{S}} = S$ for every tree $S$.
\end{proof}

\begin{lemma}\label{lem:boundary-size-degree}
Let $S=\join{\beta,S_1,\ldots,S_B}$.
If a node $x\in S$ is a boundary node of some tree $S_i$,
the values of $|\subtree{S}{x}|$ and $\degree{x}$ can be computed from
$\beta,\func(S_1),\ldots,\func(S_B)$.
\end{lemma}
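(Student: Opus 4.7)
The plan is as follows. Since $x$ is a boundary node of $S_i$, by Property~\ref{enu:boundary2} of Lemma~\ref{lem:tree-decomposition} (applied to the decomposition used inside $\partitionb$), $x$ is either the root of $S_i$ or the boundary leaf of $S_i$. I first identify a ``merge node'' $v\in S_\beta$ corresponding to $x$: if $x$ is the boundary leaf of $S_i$, set $v=v_{S_i}$; if $x$ is the root of $S_i$, set $v$ to be the parent of $v_{S_i}$ in $S_\beta$ (this may be $v_r$). Because $v_{S_j}$ is the $(j+1)$-th preorder node of $S_\beta$ (as fixed in the description of $\joinb$), the location of $v$ is determined by $\beta$ alone.

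To compute $\degree{x}$, I use Property~\ref{enu:decomposition-edge} applied to $\partition{S}$: every edge of $S$ lies in a unique $S_j$, so the child-edges of $x$ in $S$ partition according to which $S_j$ they come from. An $S_j$ contributes child-edges at $x$ iff $x$ is the root of $S_j$, which happens precisely when $v_{S_j}$ is a child of $v$ in $S_\beta$; in that case $S_j$ contributes $\fdegree{S_j}$ children. The tree $S_i$ itself in the boundary-leaf case contributes nothing, since $x$ is a leaf of $S_i$. Hence
\[
\degree{x} \;=\; \sum_{j\,:\,v_{S_j}\text{ is a child of }v\text{ in }S_\beta} \fdegree{S_j},
\]
which is a function of $\beta$ and the $\func(S_j)$'s.

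For $|\subtree{S}{x}|$ I count edges again: $|\subtree{S}{x}|-1$ equals the number of edges of $S$ with both endpoints in $\subtree{S}{x}$, and each such edge lies in exactly one $S_j$. I argue that an $S_j$ contributes its full edge count $\fsize{S_j}-1$ exactly when $v_{S_j}$ is a proper descendant of $v$ in $S_\beta$: in that case the root of $S_j$ is at or below $x$, so every node of $S_j$ (hence every edge) lies in $\subtree{S}{x}$. Conversely, if $v_{S_j}$ is not a proper descendant of $v$ then either $S_j$ is disjoint from $\subtree{S}{x}$ except possibly at $x$ itself, or $v_{S_j}=v$ and $x$ is a leaf of $S_j$; in both subcases the contribution is $0$. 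Therefore
\[
|\subtree{S}{x}| \;=\; 1 \,+\, \sum_{j\,:\,v_{S_j}\text{ is a proper descendant of }v\text{ in }S_\beta} (\fsize{S_j}-1),
\]
again expressible from $\beta$ and the $\func(S_j)$'s.

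The main point to get right is the case analysis identifying $v$ (root vs.\ boundary-leaf case, including the possibility that $v=v_r$ when $x$ is the root of $S$) together with the claim that ``$v_{S_j}$ a proper descendant of $v$ in $S_\beta$'' is the exact characterization of those $S_j$ that lie entirely in $\subtree{S}{x}$. Once the edge-partition property of Lemma~\ref{lem:tree-decomposition} is invoked, both formulas reduce to straightforward sums indexed by subtrees of $S_\beta$, so no information beyond $\beta$ and the $\func(S_j)$'s is needed.
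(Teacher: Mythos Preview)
Your proof is correct and matches the paper's approach: both identify the node of $S_\beta$ where $x$ sits and express $\degree{x}$ and $|\subtree{S}{x}|$ as sums of $\fdegree{S_j}$ over children and $\fsize{S_j}-1$ over proper descendants of that node, invoking the edge-partition property (part~\ref{enu:decomposition-edge} of Lemma~\ref{lem:tree-decomposition}). The paper only treats the boundary-leaf case explicitly and defers the root case as ``similar''; your unified merge-node $v$ covers both at once but is otherwise the same argument.
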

\begin{proof}
Let $S_\beta$ be the tree whose balanced parenthesis string is $\beta$.
Assume that $x$ is not the root of $S$ (the proof for the case when $x$ is the
root is similar).
Therefore, $x$ is the boundary leaf of some tree $S_i$.
Let $I$ be a set containing every index $j\neq i$ such that
$\preorderselectb{S_\beta}{j+1}$ is a descendant of
$\preorderselectb{S_\beta}{i+1}$.
Observe that $|\subtree{S}{x}| = 1+\sum_{j\in I} (\fsize{S_j}-1)$.
Similarly, let $I_2$ be a set containing every index $j$ such that
$\preorderselectb{S_\beta}{j+1}$ is a child of
$\preorderselectb{S_\beta}{i+1}$.
By part~\ref{enu:decomposition-edge} of Lemma~\ref{lem:tree-decomposition},
$\degree{x} = \sum_{j\in I_2} \fdegree{S_j}$.
The lemma now follows since $I,I_2$ can be computed from $\beta$ and
$\fsize{S_j},\fdegree{S_j}$ are components of $\func(S_j)$.
\end{proof}

\begin{lemma}\label{lem:properties-2}
Property~(P\ref{enu:partition-join}) is satisfied.
\end{lemma}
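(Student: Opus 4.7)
The plan is to fix $S=\join{\beta,S_1,\ldots,S_B}$ and show directly that running the deterministic algorithm of Lemma~\ref{lem:tree-decomposition} on $S$ with parameter $L(S)$ yields exactly $(\beta,S_1,\ldots,S_B)$. I would start by using the hypothesis $(\beta,\func(S_1),\ldots,\func(S_B))\in\F$ to choose a witness $A^*\in\A$ with $\gfunc(A^*)=\beta$ and $\func_i(A^*)=\func(S_i)$ for every $i$; writing $\partition{A^*}=(\beta,A_1^*,\ldots,A_B^*)$, Property~(P\ref{enu:join-partition}) yields $A^*=\join{\beta,A_1^*,\ldots,A_B^*}$, and by construction $\func(A_i^*)=\func(S_i)$. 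The plan is then to argue that the decomposition algorithm makes macro-level decisions that depend only on $\beta$ and the tuple $(\func(S_1),\ldots,\func(S_B))$; since $A^*$ and $S$ agree on these values, the algorithm executes identical steps on both, and since it recovers $A_i^*$ from $A^*$ at each macro-position, it must recover $S_i$ from $S$ at the same macro-position.

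I would carry this out in three steps. First, I would check $\fsize{S}=\fsize{A^*}$, and hence $L(S)=L(A^*)$; this follows because $\fsize{\cdot}$ is a component of $\func$ and because $\joinb$ computes the size of its output uniformly from $\beta$ and $\{\fsize{S_i}\}$. Second, I would characterize the heavy nodes of $S$: by Observation~\ref{obs:heavy} applied to the anticipated decomposition, the heavy nodes inside a piece $S_i$ are exactly the root-to-boundary-leaf path of $S_i$, while every other node of $S_i$ has the same subtree in $S$ as within $S_i$, hence size at most $\fsize{S_i}-1<L(S)$. The subtree sizes and degrees of the boundary nodes, as well as of the children on which $\packb$ operates, are computable from $\beta$ and $\{\func(S_i)\}$ alone by Lemma~\ref{lem:boundary-size-degree}. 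Third, I would trace the two phases of the algorithm of Lemma~\ref{lem:tree-decomposition} step by step and observe that every decision---identifying a heavy node, classifying it as type~1 or type~2, grouping children via $\packb$, and cutting off maximal type~1 paths---depends only on these boundary quantities, so it is identical on $S$ and $A^*$. Consequently, the pieces produced by $\partitionb$ on $S$ sit at the same macro-positions as those produced on $A^*$, and by the definition of $\joinb$ the piece at the $i$-th macro-position of $S$ is $S_i$.

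The main obstacle is the case analysis of the second phase of the algorithm in Lemma~\ref{lem:tree-decomposition}, which processes maximal paths of type~1 heavy nodes: the index $i$ chosen along such a path depends on subtree sizes along the path, and one must confirm that $\fsizeboundary{\cdot}$, $\fsizemax{\cdot}$, $\fdistance{\cdot}$ and $\fdegree{\cdot}$---all of which are components of $\func$ by design---together encode exactly the information needed for this choice to match between $S$ and $A^*$. Edge cases to handle separately include pieces with no boundary leaf, pieces whose root coincides with the root of $S$, and the degenerate case where the macro-tree $\TL{S}{L(S)}$ has a single macro-node.
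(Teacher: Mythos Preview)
Your overall strategy is the same as the paper's: pick a witness $A^*$ with $\gfunc(A^*)=\beta$ and $\func(A^*_i)=\func(S_i)$, show $|S|=|A^*|$ so that $L(S)=L(A^*)$, show that $S$ and $A^*$ have isomorphic skeletons (sets of heavy nodes), and then trace both phases of the decomposition algorithm to see that every branching decision depends only on $\beta$ and the tuple $(\func(S_1),\ldots,\func(S_B))$.

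There is, however, a concrete error in your second step. You write that a node of $S_i$ off the root--to--boundary-leaf path has subtree in $S$ equal to its subtree in $S_i$, ``hence size at most $\fsize{S_i}-1<L(S)$.'' The first part is fine, but the inequality $\fsize{S_i}-1<L(S)$ is false in general: since $\fsize{S_i}=\fsize{A^*_i}$ and part~\ref{enu:decomposition-size} of Lemma~\ref{lem:tree-decomposition} only gives $\fsize{A^*_i}\le 2L(A^*)+1=2L(S)+1$, the quantity $\fsize{S_i}-1$ can be as large as $2L(S)$. This is exactly why the paper includes $\fsizemax{\cdot}$ in $\func$. The correct argument (which the paper gives) is by contradiction: if some off-path node $y\in S_i$ were heavy in $S$, then taking the highest such $y$ one gets $\fsizemax{S_i}\ge|\subtree{S}{y}|\ge L(S)$; since $\fsizemax{A^*_i}=\fsizemax{S_i}$, the piece $A^*_i$ would then contain a heavy node of $A^*$ off its boundary path, contradicting Observation~\ref{obs:heavy} applied to the \emph{known} decomposition of $A^*$. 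You list $\fsizemax{\cdot}$ among the ingredients in your closing paragraph, but your skeleton argument as written does not use it and does not go through.

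A related slip: you invoke ``Observation~\ref{obs:heavy} applied to the anticipated decomposition'' of $S$. That is circular, since establishing that decomposition is the goal. Observation~\ref{obs:heavy} should be applied to the witness $A^*$ and its pieces $A^*_i$; the conclusion for $S$ then follows only after transferring via the equalities $\func(S_i)=\func(A^*_i)$, as above.
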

\begin{proof}
Suppose that $\beta\in \Phi_2$ is a balanced parenthesis string
and $S_1,\ldots,S_B \in \A$ are trees such that
$(\beta,\func(S_1),\ldots,\func(S_B))\in \F$
(recall that $\F = \{(g(S),f_1(S),\ldots,f_B(S)) \colon S\in \A\}$).
We need to show that
$\partition{\join{\beta,S_1,\ldots,S_B}} = (\beta,S_1,\ldots,S_B)$.
Denote $S = \join{\beta,S_1,\ldots,S_B}$.
By the definition of $\F$, there is a tree $S^*$ such that
$\gfunc(S^*) = \beta$ and $\func_i(S^*) = \func(S_i)$ for all $i$.
Denote $\partition{S^*} = (\beta,S^*_1,\ldots,S^*_B)$.
Let $S_\beta$ be a tree whose balanced parenthesis string is $\beta$.

By Lemma~\ref{lem:boundary-size-degree}, $|S|=|S^*|$ and therefore
$L(S) = L(S^*)$.
Recall that a node of $S$ or $S^*$ is heavy if the size of its subtree is at
least $L(S)$.
Define the \emph{skeleton} of a tree to be the subtree that contains
the heavy nodes of the tree.
We first claim that the skeleton of $S^*$ can be reconstructed from
$\beta,\func(S^*_1),\ldots,\func(S^*_B)$.
To prove this claim,
define trees $P_1,\ldots,P_B$, where $P_i$ is a path of length
$\fdistance{S^*_i}$.
By Observation~\ref{obs:heavy},
the skeleton of $S^*$ is isomorphic to $\join{\beta,P_1,\ldots,P_B}$.

We now show that $S$ and $S^*$ have isomorphic skeletons.
Consider some subtree $S_i$ such that $\preorderselectb{S_\beta}{i+1}$ is not
a leaf of $S_\beta$.
Let $x$ be the boundary leaf of $S_i$,
and let $x^*$ be the boundary leaf of $S^*_i$.
By Lemma~\ref{lem:boundary-size-degree} and Observation~\ref{obs:heavy},
$|\subtree{S}{x}| = |\subtree{S^*}{x^*}| \geq L(S^*) = L(S)$, so
$x$ is a heavy node of $S$.
Therefore, all the nodes of $S$ that are on the path
between the root of $S_i$ and the boundary leaf of $S_i$ are heavy nodes of $S$
(this follows from the fact that all ancestors of a heavy node are heavy).
Let $S'$ be the subtree of $S$ containing all the nodes of $S$ that are
nodes on the path between the root and the boundary leaf of $S_i$,
for every $S_i$ such that $\preorderselectb{S_\beta}{i+1}$ is not a leaf of
$S_\beta$.
Since $\fdistance{S_i} = \fdistance{S^*_i}$ for all $i$, it follows that
$S'$ is isomorphic to $\join{\beta,P_1,\ldots,P_B}$ and to the
skeleton of $S^*$.
It remains to show that $S'$ is the skeleton of $S$.
Assume conversely that there is a heavy node $y$ of $S$ which is not in $S'$.
We can choose such $y$ whose parent $x$ is in $S'$.
Let $S_i$ be the tree containing $y$.
Since the $y$ is not on the path between the root and the boundary leaf of
$S_i$,
all the descendants of $y$ are in $S_i$.
Since $x$ is on the path between the root and the boundary leaf of $S_i$
(if $S_i$ does not have a boundary leaf, $x$ is the root of $S_i$),
$\fsizemax{S_i} \geq |\subtree{S}{y}| \geq L(S) = L(S^*)$.
It follows that $\fsizemax{S^*_i} = \fsizemax{S_i} \geq L(S^*)$
which means that $S^*_i$ has a heavy node which is not on the path between
the root and the boundary leaf.
This contradicts Observation~\ref{obs:heavy}.
Therefore, $S$ and $S^*$ have isomorphic skeletons.

We now prove that $\partition{S} = (\beta,S_1,\ldots,S_B)$.
Suppose we run the decomposition algorithm on $S$ and on $S^*$.
In the first phase of the algorithm, the algorithm processes type~2 heavy
nodes.
Since $S$ and $S^*$ have isomorphic skeletons, there is a bijection between
the type~2 heavy nodes of $S$ and the type~2 heavy nodes of $S^*$.
Let $x$ be a type~2 heavy node of $S$ and let $x^*$ be the corresponding
type~2 heavy node of $S^*$.
Let $x^*_1,\ldots,x^*_k$ be the children of $x^*$, and let
$x^*_{h_1},\ldots,x^*_{h_{k'}}$ be the heavy children of $x^*$.
When processing $x^*$, the decomposition algorithm generates the following
subtrees of $S^*$.
\begin{enumerate}
\item A subtree whose nodes are $x^*$ and its parent.
\item For $j=1,\ldots,k'$, a subtree whose nodes are $x^*$ and $x^*_{h_j}$.
\item For $j=1,\ldots,k'+1$, the subtrees generated by
$\pack{x^*}{x^*_{h_{j-1}+1},\ldots,x^*_{h_j-1}}$.
\end{enumerate}
For every subtree $S^*_j$ of the first two types above that is generated when
processing $x^*$, the subtree $S_j$ is generated when processing $x$
(since the number of heavy children of $x$ is equal to the number of heavy
children of $x^*$).
We now consider the subtrees of the third type.
Suppose without loss of generality that $h_1>1$. Consider the call to
$\pack{x^*}{x^*_1,\ldots,x^*_{h_1-1}}$.
The first tree generated by this call, denoted $S^*_{a}$, consists of $x^*$,
some children $x^*_1,\ldots,x^*_l$ of $x^*$,
and all the descendants of $x^*_1,\ldots,x^*_l$, where
$l = \fdegree{S^*_{a}}$.
From the definition of procedure $\packb$,
$\sum_{j=1}^{l-1}|\subtree{S^*}{x^*_j}| < L(S^*)-1$.
Additionally, if $l < h_1-1$,
$\sum_{j=1}^{l}|\subtree{S^*}{x^*_j}| \geq L(S^*)-1$.

Let $I = \{\preorderrank{x^*_j}-1\colon j=1,\ldots,h_1-1\}$.
We have that $h_1-1 = \sum_{j\in I} \fdegree{S^*_j}$.
The number of children of $x$ before the first heavy child of $x$ is equal to
$\sum_{j\in I} \fdegree{S_j} = \sum_{j\in I} \fdegree{S^*_j} = h_1-1$.
Let $x_1,\ldots,x_{h-1}$ be these children.

Since $\fdegree{S_{a}} = \fdegree{S^*_{a}} = l$,
when the decomposition algorithm processes the node $x$ of $S$ we have
\[
\sum_{j=1}^{l-1}|\subtree{S}{x_j}|
= \fsize{S_{a}}-\fsizeright{S_{a}} - 1
= \fsize{S^*_{a}}-\fsizeright{S^*_{a}} - 1
= \sum_{j=1}^{l-1}|\subtree{S^*}{x^*_j}|
< L(S^*)-1 = L(S)-1.
\]
Additionally, if $l < h_1-1$,
\[
\sum_{j=1}^{l}|\subtree{S}{x_j}|
= \fsize{S_{a}} - 1
= \fsize{S^*_{a}} - 1
= \sum_{j=1}^{l}|\subtree{S^*}{x^*_j}|
\geq L(S^*) = L(S).
\]
Therefore, the first tree generated by $\pack{x}{x_1,\ldots,x_{h_1-1}}$
is $S_{a}$. Continuing with the same arguments, we obtain that
for every tree $S^*_j$ generated by a call to $\pack{x^*}{\cdot}$ when
processing $x^*$,
the tree $S_j$ is generated by a call to $\pack{x}{\cdot}$ when
processing $x$.

Now consider the second phase of the algorithm.
Let $x^*_1,\ldots,x^*_k$ be a maximal path of type~1 heavy nodes of $S^*$,
and let $x_1,\ldots,x_k$ be the corresponding maximal path of type~1 heavy
nodes of $S$.
For simplicity, assume that $x^*_k$ does not have a heavy child.
Let $S^*_{a_1},S^*_{a_2},\ldots$
be the subtrees generated by $\pack{x^*_i}{y^*_1,y^*_2,\ldots}$, where
$y^*_1,y^*_2,\ldots$ are the children of $x^*_i$.
Let $S^*_a$ be the subtree from $S^*_{a_1},S^*_{a_2},\ldots$ that contains
$x^*_k$.
Let $l = \fdistance{S^*_a} = \fdistance{S_a}$.
By the definition of the decomposition algorithm,
$\fsizeboundary{S^*_a} = |\subtree{S^*}{x^*_{k-l+1}}| < L(S^*)$.
Moreover, if $l < k-1$,
$1+\sum_j (\fsize{S^*_{a_j}}-1) = |\subtree{S^*}{x^*_{k-l}}| \geq L(S^*)$.
Therefore,
$|\subtree{S}{x_{k-l+1}}| = \fsizeboundary{S_a} = \fsizeboundary{S^*_a} < L(S)$
and  if $l < k-1$,
$|\subtree{S}{x_{k-l}}| = 1+\sum_j (\fsize{S_{a_j}}-1)
= 1+\sum_j (\fsize{S^*_{a_j}}-1) \geq L(S)$.
Therefore, when processing the path $x_1,\ldots,x_k$, the decomposition
algorithm makes a call to $\pack{x_i}{y_1,y_2,\ldots}$, where
$y_1,y_2,\ldots$ are the children of $x_i$.
Using the same argument used for the first phase of the algorithm, we obtain
that the trees $S_{a_1},S_{a_2},\ldots$ are generated by
$\pack{x_i}{y_1,y_2,\ldots}$.
\end{proof}

\begin{lemma}\label{lem:properties-3}
Property~(P\ref{enu:fprime}) is satisfied.
\end{lemma}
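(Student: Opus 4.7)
The plan is to define $f'(\beta, f(S_1),\ldots,f(S_B))$ component by component, showing that each of the eight entries of $f(S)$ can be reconstructed from $\beta=g(S)$ and $f(S_1),\ldots,f(S_B)$. Let $S_\beta$ denote the tree whose balanced-parenthesis string is $\beta$; then $S_\beta$ is isomorphic to $\TL{S}{L(S)}$ and each $S_i$ corresponds to the node $\preorderselectb{S_\beta}{i+1}$. The central tool is Lemma~\ref{lem:boundary-size-degree}, which furnishes $|\subtree{S}{x}|$ and $\degree{x}$ for every boundary node $x\in S$ from $\beta$ and $f(S_1),\ldots,f(S_B)$ alone.

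Several components follow immediately. The size satisfies $|S|=1+\sum_{i=1}^{B}(|S_i|-1)$ by part~\ref{enu:decomposition-edge} of Lemma~\ref{lem:tree-decomposition}. The value $\fdegree{S}$ is exactly the degree Lemma~\ref{lem:boundary-size-degree} assigns to the root of $S$, a boundary node shared by the $S_i$'s whose $v_{S_i}$ is a child of $v_r$ in $S_\beta$. For $\fdistance{S}$ and $\fpreorder{S}$, I trace in $S_\beta$ the path from $v_r$ to the leaf whose subtree $S_{i^*}$ carries the external boundary-leaf designation: $\fdistance{S}$ is the telescoping sum of $\fdistance{S_{i_j}}$ along that path, while $\fpreorder{S}$ is accumulated from the $|S_j|$ of subtrees visited before $S_{i^*}$ in preorder on $S_\beta$, plus $\fpreorder{S_{i^*}}$.

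For the three subtree-size components $\fsizeright{S}$, $\fsizeboundary{S}$ and $\fsizemax{S}$, I first identify the relevant child of the root of $S$ from $\beta$: the rightmost child lives in the last root-sharing subtree $S_{i_m}$, the heavy-path child lives in the first heavy-path subtree $S_{i_1}$, and the off-heavy-path maximiser ranges over all $S_i$'s whose root is a node on the boundary path of $S$. The subtree size at each such child is either read off directly from the matching component ($\fsizeright{S_{i_m}}$, $\fsizeboundary{S_{i_1}}$ or $\fsizemax{S_i}$) or, when that child is itself a boundary node of $S_i$, obtained by applying Lemma~\ref{lem:boundary-size-degree}, possibly combined with the sizes of descendant subtrees attached below it. The value $\fsizemax{S}$ is then the maximum of all such contributions, and the structure of $\beta$ determines which $S_i$'s participate.

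The main obstacle is $\entropyr{S}$, because $\er{\cdot}$ is not additive in node degrees. The key observation is that the miscounting is localised to boundary nodes: for every non-boundary node $x\in S$, its degree in the unique $S_i$ containing it equals $\degree{x}$ in $S$, so its contribution to $\sum_i \entropyr{S_i}$ is already correct. Hence I write
\[
\entropyr{S}=\sum_{i=1}^{B}\entropyr{S_i}+\Delta,
\]
where $\Delta$ corrects the boundary nodes. For each boundary node $x$ of $S$, $\Delta$ subtracts $\er{\fdegree{S_i}}$ for every $S_i$ of which $x$ is the root, subtracts $\er{0}$ for the $S_i$ (if any) of which $x$ is the boundary leaf, and adds $\er{\degree{x}}$ once, with $\degree{x}$ obtained from Lemma~\ref{lem:boundary-size-degree}. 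The set of boundary nodes of $S$ together with the list of $S_i$'s containing each is read off from $\beta$: the root of $S$ is shared by the $S_i$'s whose $v_{S_i}$ is a child of $v_r$, and every internal edge of $S_\beta$ corresponds to a boundary leaf shared between a parent $S_i$ and one or more child $S_j$'s. Consequently $\Delta$, and therefore $\entropyr{S}$, is a function of $\beta$ and $f(S_1),\ldots,f(S_B)$, completing the construction of $f'$.
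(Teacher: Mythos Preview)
Your proof is correct and takes essentially the same approach as the paper: verify component by component that each entry of $f(S)$ can be reconstructed from $\beta$ and $f(S_1),\ldots,f(S_B)$, leaning on Lemma~\ref{lem:boundary-size-degree} whenever a degree or subtree size at an inner boundary node is needed. The paper works out $\entropyr{S}$, $|S|$, and $\fsizeright{S}$ explicitly and then writes ``we omit the details'' for the remaining five components, whereas you sketch all eight; your entropy correction $\Delta$ is in fact slightly more careful than the paper's stated formula, since you also subtract the spurious $\er{0}$ contributions coming from boundary leaves inside the $S_i$'s.
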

\begin{proof}
Let $S$ be a tree and $\partition{S} = (\beta,S_1,\ldots,S_B)$.
Recall that $\func(S)=(\entropyr{S},\fsize{S},\fsizeright{S},\fsizeboundary{S},
\fsizemax{S},\fdegree{S},\fdistance{S},\fpreorder{S})$
and $\gfunc(S) = \beta$ is the balanced parenthesis string of $\TL{S}{L(S)}$.
A node $x$ of $S$ is called an \emph{inner boundary node} if it is a boundary
node of some subtree $S_i$.

By definition,
$\entropyr{S}$ is equal to
$\sum_{i=1}^B (\entropyr{S_i}-\er{\fdegree{S_i}})$ plus the sum of
$\er{\degree{x}}$ for every inner boundary node $x$ of $S$.
By Lemma~\ref{lem:boundary-size-degree}, every such value $\er{\degree{x}}$
can be computed from $\gfunc(S),\func(S_1),\ldots,\func(S_B)$.
Therefore, $\entropyr{S}$ can be computed from
$\gfunc(S),\func(S_1),\ldots,\func(S_B)$.

Similarly, $\fsize{S}$ is equal to $\sum_{i=1}^B (\fsize{S_i}-1)$ plus
the number of inner boundary nodes of $S$.
The number of inner boundary nodes of $S$ is equal to the number of internal nodes in
$\TL{S}{L(S)}$.
Thus, $\fsize{S}$ can be computed from $\gfunc(S),\func(S_1),\ldots,\func(S_B)$.

We next consider $\fsizeright{S}$.
Let $x$ be the rightmost child of the root of $S$.
Let $v_{S_i}$ be the rightmost child of the root of $\TL{S}{L(S)}$.
Then, the tree $S_i$ contains both the root of $S$ and $x$.
If $x$ is not the boundary leaf of $S_i$ then all the descendants of $x$ are
in $S_i$. Thus, $\fsizeright{S} = \fsizeright{S_i}$.
Otherwise, by Lemma~\ref{lem:boundary-size-degree},
$\fsizeright{S}$ can be computed from $\gfunc(S),\func(S_1),\ldots,\func(S_B)$.

The other components of $\func(S)$ can also be computed from
$\gfunc(S),\func(S_1),\ldots,\func(S_B)$. We omit the details.
\end{proof}

\begin{lemma}\label{lem:properties-4}
Property~(P\ref{enu:size}) is satisfied.
\end{lemma}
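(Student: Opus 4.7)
The plan is to derive Property~(P\ref{enu:size}) directly from part~\ref{enu:decomposition-size} of Lemma~\ref{lem:tree-decomposition}, exploiting the particular choice of $L(S)$ in the definition of $\partitionb$ given earlier in this section. By construction, on a tree $S\in\A$ the nonempty output subtrees of $\partition{S}$ are exactly the trees of $\D{S}{L(S)}$ with $L(S)=c|S|/B$, where $c$ is the (absolute) constant chosen so that the decomposition produces at most $B$ trees; any additional output subtrees $S_{k+1},\ldots,S_B$ are empty and therefore trivially satisfy the required size bound.

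For a nonempty subtree $S_i$, part~\ref{enu:decomposition-size} of Lemma~\ref{lem:tree-decomposition} gives
\[
|S_i| \;\leq\; 2L(S)+1 \;=\; \frac{2c|S|}{B}+1.
\]
In the regime $|S|\geq B$ this is at most $(2c+1)|S|/B$, so I would take $\delta=2c+1$. Since $\delta$ is a fixed constant depending only on Lemma~\ref{lem:tree-decomposition} and $B=\log^{1/3}n$ grows without bound, the side condition $\delta\leq B/2$ is automatically satisfied for all $n$ larger than some constant.

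The only delicate point is the small range $|S|<B$, where $\delta|S|/B$ may drop below $1$ while a single nonempty $S_i$ can still have size up to $|S|$. I expect to dispose of this by observing that for such $|S|$ the tree already fits inside the constant-sized base case of the generalized aB-tree: it can be handled by the lookup table accounted for by the $a_1$ term in Theorem~\ref{thm:aB-tree-2}, and the recursion need not be applied. The main obstacle, then, is not the arithmetic of the bound itself (which is essentially a one-line application of Lemma~\ref{lem:tree-decomposition}) but ensuring that the base case of the aB-tree truncates the recursion before entering this degenerate regime.
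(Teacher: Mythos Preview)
Your approach is exactly the paper's: the proof there is the single sentence ``The lemma follows from part~\ref{enu:decomposition-size} of Lemma~\ref{lem:tree-decomposition},'' with no further elaboration. The additional care you take about the constant $\delta$, the side condition $\delta\leq B/2$, and the small-$|S|$ regime goes beyond what the paper writes; note in particular that the paper does not address your last concern at all, and your proposed resolution via the $a_1$ term is not quite how the base case is set up (the recursion stops at size $0$ or $1$, not at size $<B$), so that point would need a different patch---but this is a wrinkle in the paper's statement of (P\ref{enu:size}) rather than a defect in your argument.
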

\begin{proof}
The lemma follows from part~\ref{enu:decomposition-size} of
Lemma~\ref{lem:tree-decomposition}.
\end{proof}
Our data structure for the tree $T$ consists of the following components.
\begin{itemize}
\item For each macro tree $S$, the aB-tree of $S$, stored
using Theorem~\ref{thm:aB-tree-2}.
\item For each macro tree $S$, the values $\func(S)$ and $\gfunc(S)$.
\item Additional information and data structures for handling queries, which
will be described later.
\end{itemize}

The space of the aB-trees and the values $\func(S),\gfunc(S)$ is
$\sum_S (\log \Num{\func(S)}{\gfunc(S)}+2+\lceil \log |\Phi|\rceil
+ \lceil \log |\Phi_2|\rceil )$, where the summation is over every
macro tree $S$.
By part~\ref{enu:intervals} of Lemma~\ref{lem:tree-decomposition},
\[
\sum_S (2+\lceil \log |\Phi|\rceil + \lceil \log |\Phi_2|\rceil )
= O(n/L\cdot B) = O(n/\log^t n).
\]
We next bound $\sum_S \log \Num{\func(S)}{\gfunc(S)}$.
Since $\entropyr{S},|S|$ are components of $\func(S)$, we have from
Lemma~\ref{lem:entropy}
that $\Num{\func(S)}{\gfunc(S)} \leq 2^{\entropyr{S}+1}$.
Therefore, $\sum_S \log \Num{\func(S)}{\gfunc(S)} \leq \sum_S (\entropyr{S}+1)$.
By definition,
$\sum_S \entropyr{S}$ is equal to $\entropyr{T}+\sum_S \er{d_S}$
minus the sum of $\er{\degree{x}}$ for every node $x$ of $T$ which is a boundary
node of some macro tree.
Therefore,
\[\sum_S \entropyr{S} \leq \entropyr{T}+\sum_S \er{\fdegree{S}}
\leq (n\entropy{T}+O(n/L))+O(n/L \cdot \log n)
= n\entropy{T}+O(n/\log^t n).
\]

Most of the queries on $T$ are handled in a similar way these queries are
handled in the data structure of Farzan and Munro~\cite{FarzanM14}.
We give some examples below.
We assume that a node $x$ in $T$ is represented by its preorder number.
In order to compute the macro tree that contains a node $x$, we store
the following structures. 
\begin{itemize}
\item A rank-select structure on a binary string $B$ of length $n$ in which
$B[x] = 1$ if nodes $x$ and $x-1$ belong to different macro trees.
\item An array $M$ in which $M[i]$ is the number of the macro tree that contains
node $x=\select{1}{B}{i}$. 
\end{itemize}
By part~\ref{enu:intervals} of Lemma~\ref{lem:tree-decomposition},
the number of ones in $B$ is $O(n/L)$.
Therefore, the space for $B$ is
$O(n/L\cdot\log L)+O(n/\log^t n) = O(n/\log^t n)$ bits
(using the rank-select structure of Patrascu~\cite{Patrascu08}),
and the space for $M$ is $O(n/L\cdot \log n) = O(n/\log^t n)$ bits.

For handling $\depth{x}$ queries, the data structure stores
the depths of the roots of the macro trees. The required space is
$O(n/L\cdot \log n)=O(n/\log^t n)$ bits.
Answering a $\depth{x}$ query is done by finding the macro tree $S$ containing
$x$.
Then, add the depth of the root of $S$ (which is stored in the data structure)
to the distance between $x$ and the root of $S$.
The latter value is computed using the aB-tree of $S$.
It suffices to describe how to compute this value when the aB-tree is
stored naively.
Recall that the root of the aB-tree corresponds to $S$,
and the children of the root corresponds to subtrees $S_1,\ldots,S_B$ of $S$.
Finding the subtree $S_i$ that contains $x$ can be done using a lookup table
indexed by $\gfunc(S)$, $\fsize{S_1},\ldots,\fsize{S_B}$,
and $\fpreorder{S_1},\ldots,\fpreorder{S_B}$.
Next, compute the distance between the root of $S_i$ and the root of $S$
using a lookup table indexed by $\gfunc(S)$ and
$\fdistance{S_1},\ldots,\fdistance{S_B}$.
Then the query algorithm descend to the $i$-th child of the root of the aB-tree
and continues the computation in a similar manner.

The handling of level ancestor queries is different than the way these queries
are handled in the structure of Farzan and Munro.
We define weights on the edges of $\TL{T}{L}$ as follows.
For every non-root node $v_S$ in $\TL{T}{L}$, the weight of the edge between
$v_S$ and its parent is $\fdistance{S}$.
The data structure stores a weighted ancestor structure on $\TL{T}{L}$.
We use the structure of Navarro and Sadakane~\cite{NavarroS14} which has
$O(1)$ query time.
The space of this structure is
$O(n'\log n'\cdot\log (n'W)+n'W/\log^{t'}(n'W))$ for every constant $t'$, where
$n' = |\TL{T}{L}|$ and $W$ is the maximum weight of an edge of $\TL{T}{L}$.
Since $n' = O(n/L)$ and $W = O(L)$, we obtain that the space is $O(n/\log^t n)$
bits.

In order to answer a $\levelancestor{x}{d}$ query, first find the macro
tree $S$ that contains $x$.
Then use the aB-tree of $S$ to find $\levelancestor{x}{d}$ if this node is
in $S$. Otherwise, let $r$ be the root of $S$ and let $d'$ be the distance
between $r$ and $x$ ($d'$ is computed using the aB-tree).
Next, perform a $\levelancestor{\parent{v_S}}{d-d'}$ on $\TL{T}{L}$, and let
$v_{S'}$ be the answer.
Let $v_{S''}$ be the child of $v_{S'}$ which is an ancestor of $v_S$.
The node $\levelancestor{x}{d}$ is in the macro tree $S''$,
and it can be found using a query on the aB-tree of $S''$.

\bibliographystyle{plain}
\bibliography{ds,dekel}
\end{document}